\newcommand{\Prob}{\mbox{Prob}}
\begin{document}
 \captionsetup[figure]{name={Fig.},labelsep=period}	

\title{Robust violation of a multipartite Bell inequality from the perspective of a single-system game \footnote{Mod. Phys. Lett. A 37, 2250082(2022)}}

\author{Gang-Gang He}
\affiliation{Department of Physics, School of Science, Tianjin University, Tianjin 300072, China}

\author{Xing-Yan Fan}
\affiliation{Theoretical Physics Division, Chern Institute of Mathematics, Nankai University, Tianjin 300071,  China}

\author{ Fu-Lin Zhang}
\email{flzhang@tju.edu.cn }
\affiliation{Department of Physics, School of Science, Tianjin University, Tianjin 300072, China}

\date{\today}
	\begin{abstract}
Recently, Fan \textit{et al.} [Mod. Phys. Lett. A 36, 2150223 (2021)], presented a generalized Clauser-Horne-Shimony-Holt (CHSH) inequality, to identify $N$-qubit Greenberger-Horne-Zeilinger (GHZ) states.
They showed an interesting phenomenon that  the maximal violation of  the generalized CHSH inequality is robust under some specific noises.
In this work, we map the inequality to the CHSH game, and consequently to the CHSH* game in a single-qubit system.
This mapping provides an explanation for the robust violations in $N$-qubit systems.
Namely, the robust violations, resulting from the degeneracy of the generalized CHSH operators
correspond to the symmetry of the maximally entangled two-qubit states and  the identity transformation in the single-qubit game.
This explanation enables us to exactly demonstrate that the degeneracy is $2^{N-2}$.
\end{abstract}

\keywords{Clauser-Horne-Shimony-Holt game; Bell's inequality; Greenberger-Horne-Zeilinger states; robust violations}

\maketitle
	
\section{Introduction}	
Quantum entanglement \cite{SCat,RevModPhys.80.517,RevModPhys.81.865,PhysRevLett.78.5022,PhysRevLett.80.2245,PhysRevA.100.022320}, brought about by the superpositionprinciple, puzzled many physicists in the early days of quantum theory.
In the original work for Einstein-Podolsky-Rosen (EPR) paradox in entangled states \cite{EPR},
Einstein and his collaborators proposed that quantum mechanics provides probabilistic results because of  its incompleteness.
 In 1964, Bell proposed  an inequality \cite{Bell} to solve the EPR paradox under the assumptions of local reality and hidden variable theory.
Such inequality and its generalized versions \cite{DAS20173928,DAS} revealed nonlocality \cite{RevModPhys.86.419,2019Bell,1998Bell,PhysRevA.100.052121,RevModPhys.81.1727,PhysRevLett.65.1838,PhysRevA.46.5375} in entangled states.
The Clauser-Horne-Shimony-Holt  inequality \cite{PRL.23.880} is the most widely studied Bell inequality for two-qubit systems,
which is written as
 \begin{equation}
 \langle I_{CHSH}^{2}\rangle =\langle A_{0}B_{0}\rangle+\langle A_{0}B_{1}\rangle+\langle A_{1}B_{0}\rangle-\langle A_{1}B_{1}\rangle \leq 2,
 \end{equation}
 with $A_{a},B_{b}$ ($a,b=0,1$) being measurement settings.
 It can be violated by all the two-qubit entangled pure states.

Researchers have tried to understand the nonlocality  from the perspective of game theory \cite{2000Nonlocality,PhysRevA.78.052103,PRA.98.060302(R)}.
The author of Ref. \cite{2000Nonlocality} setted up the so-called CHSH game  to show the  advantage of  quantum strategies.
There are two players, Alice and Bob, in the game who cannot communicate with each other.
They share a two-qubit system and have measurement operators $A_{a}$ and $B_{b}$ respectively.
Here, $a$ and $b$ $=0,1$ are  two input values.
Let $x$ and $y$ $=0,1$ represent the outcomes of Alice and Bob.
When $x\oplus y=ab$, the players win the game, with $\oplus$ denoting modulo $2$ addition.
It is directly to find the linear relationship between their success probability
 \begin{equation}\frac{1}{4}\sum_{a,b} \Prob\left(x\oplus y=ab|a,b\right) \end{equation}
and the  expected value of $\langle I_{CHSH}^{2}\rangle$.
%
%

Recently, Henaut \textit{et al.} \cite{PRA.98.060302(R)} introduced a single-player CHSH* game with two inputs $a$ and $b$.
Any strategy in the CHSH* game can be mapped to a strategy in the CHSH game with two-qubit  maximally entangled states.
Without loss of generality, let Alice and Bob share one of the Bell states, $|\psi_{+}\rangle=\frac{1}{\sqrt{2}}\left(|00\rangle+|11\rangle\right)$.
The player of the CHSH* game, Carol, has a qubit in state $|+\rangle=\frac{1}{\sqrt{2}}\left(|0\rangle+|1\rangle\right)$.
Alice and Bob apply arbitrary local unitary transformations $\mathcal{A}_{a}^{T}$ and $\mathcal{B}_{b}$ to their qubits,
 and then measure the Pauli operator $\sigma_{x}$ on their qubits respectively.
 This is equivalent to  the local measurement of ${A}_{a}{B}_{b}=\mathcal{A}_{a}^{\ast}\sigma_{x}\mathcal{A}_{a}^{T}\otimes \mathcal{B}_{b}^{\dagger}\sigma_{x}\mathcal{B}_{b}$ on $|\psi_{+}\rangle$.
Carol applies $\mathcal{A}_{a}$ and $\mathcal{B}_{b}$ on the state $|+\rangle$ and measures $\sigma_{x}$ on her qubit.
Similarly, this represents the measurement of $C_{ab}=\mathcal{A}_{a}^{\dagger}\mathcal{B}_{b}^{\dagger}\sigma_{x}\mathcal{B}_{b}\mathcal{A}_{a}$ on $|+\rangle$.
She wins the game when her outcome $c = ab \  (\! \!\! \mod 2)$.
The success probabilities of the two games are equal; i.e.,
\begin{equation}
\frac{1}{4}\sum_{a,b} \Prob\left(x\oplus y=ab|a,b\right)= \frac{1}{4} \sum_{a,b} \Prob\left(c=ab|a,b\right),
\end{equation}
which arises from the  expected values
\begin{equation}
\langle I_{CHSH}^{2}\rangle=\langle I_{CHSH}^{1}\rangle,
\end{equation}
with
 $\langle I_{CHSH}^{2}\rangle=\sum_{a,b}(-1)^{ab}\langle\psi_{+}|\mathcal{A}_{a}^{\ast}\sigma_{x}\mathcal{A}_{a}^{T}\otimes \mathcal{B}_{b}^{\dagger}\sigma_{x}\mathcal{B}_{b}|\psi_{+}\rangle$
and $ \langle I_{CHSH}^{1}\rangle=\sum_{a,b}(-1)^{ab}\langle+|\mathcal{A}_{a}^{\dagger}\mathcal{B}_{b}^{\dagger}\sigma_{x}\mathcal{B}_{b}\mathcal{A}_{a}|+\rangle$.

On the other hand,  to distinguish $N$-qubit Greenberger-Horne-Zeilinger (GHZ) states, Fan \textit{et al.} presented a generalized CHSH inequality in their very recent work \cite{2021Greenberger} .
It is expressed as
\begin{equation}\label{CHSHN}
\langle I_{CHSH}^{N}\rangle=\langle \mathbb{A}_{0}\mathbb{B}_{0}\rangle+\langle \mathbb{A}_{0}\mathbb{B}_{1}\rangle+\langle\mathbb{A}_{1}\mathbb{B}_{0}\rangle-\langle\mathbb{A}_{1}\mathbb{B}_{1}\rangle\leq 2.
\end{equation}
$\mathbb{A}_{0}$ and $\mathbb{A}_{1}$ denote the tensor products of local observables for the first $N-1$ qubit.
$\mathbb{B}_{0}$ and $\mathbb{B}_{1}$ represent two observables of the $N$th qubit.
The $N$-qubit GHZ states can be identified by the maximal violations of the inequality.
Besides, they found an interesting quantum phenomenon of robust violations of the generalized CHSH inequality,  in which the maximal violation can be robust under some specific noises.
Such phenomenon originates from the degeneracy of the  largest eigenvalue of Bell-function $I_{CHSH}^{N}$.

In this work, we show the mapping between  the CHSH* and CHSH games proposed by Henaut \emph{et al.} \cite{PRA.98.060302(R)}  can be extended to the generalized CHSH inequality for $N$-qubit case.
The relations among the inequalities (and  the CHSH* game) provide an  explanation for the robust violations
and give the degeneracy of  $I_{CHSH}^{N}$.
Namely, we map the $N$-qubit Bell function $I_{CHSH}^{N}$ to $I_{CHSH}^{2}$ for the two-qubit case, and  consequently to $I_{CHSH}^{1}$ for the one-qubit system.
For the  $N$-qubit GHZ states ($|\psi_+\rangle$ and $|+\rangle$for $N=2$ and $1$), the  expected values satisfy $|\langle I^{1}_{CHSH}\rangle|=|\langle I^{2}_{CHSH}\rangle|\geq|\langle I^{N}_{CHSH}\rangle|$ when $|\langle I^{N}_{CHSH}\rangle|\geq2$.
The equality holds  when the generalized CHSH inequality achieves the Tsirelson's bound,
i.e.
\begin{equation}\langle I^{1}_{CHSH}\rangle = \langle I^{2}_{CHSH}\rangle = \langle I^{N}_{CHSH}\rangle = \pm 2 \sqrt{2}. \end{equation}
This equation is invariable under specific local unitary transformations on the two-qubit and $N$-qubit Bell-functions,
which corresponds to the identity operation on the single-qubit system.
By using these invariance, we exactly demonstrate the degeneracy of $\langle I^{N}_{CHSH}\rangle$, which causes the robust violations.

\section{The generalized CHSH inequality and games}\label{map}

We first introduce the mappings from $\langle I_{CHSH}^{N}\rangle$ to $\langle I_{CHSH}^{2}\rangle$,
where the measured quantum states are the GHZ states
 \begin{equation}
 |G\rangle=\frac{1}{\sqrt{2}}(|00...0\rangle+|11...1\rangle)
 \end{equation}
and the Bell state $|\psi_+\rangle$.
Unless explicitly stated otherwise, all the expected values in this paper are of the states
$|G\rangle$,  $|\psi_+\rangle$ and  $|+\rangle$ corresponding  to $N$-, two- and one-qubit system.
The local measurement operators in $N$-qubit system
can be written as
\begin{equation}\label{eq.2}
X_{j}=\vec{n}_{j}\cdot\vec{\sigma},\quad X_{j}^{\prime}=\vec{n}_{j}^{\prime}\cdot\vec{\sigma}\quad(j=1,2\cdots N),
\end{equation}
where $\vec{\sigma}=(\sigma_{x},\sigma_{y},\sigma_{z})$ is the vector of Pauli matrices, $\vec{n}_{j}=(\sin\alpha_{j}\cos\varphi_{j},\sin\alpha_{j}\sin\varphi_{j},\cos\alpha_{j})$ and $\vec{n}_{j}^{\prime}=(\sin\alpha_{j}^{\prime}\cos\varphi_{j}^{\prime},\sin\alpha_{j}^{\prime}\sin\varphi_{j}^{\prime},\cos\alpha_{j}^{\prime})$ denote the measurement direction of the $j$th qubit.
 Fan \textit{et al.} \cite{2021Greenberger}  defined
the measurement operators of Alice and Bob  in (\ref{CHSHN}) as
\begin{equation}\label{A0}
\mathbb{A}_{0}=\bigotimes_{j=1}^{N-1} X_{j},\mathbb{A}_{1}=\bigotimes_{j=1}^{N-1} X_{j}^{\prime}, \mathbb{B}_{0}=X_{N},\mathbb{B}_{1}=X_{N}^{\prime}.
\end{equation}
 One can derive the first term of $\langle I_{CHSH}^{N}\rangle$  as
 \begin{align}\label{eq.12}
\begin{split}
\langle \mathbb{A}_{0}\otimes\mathbb{B}_{0}\rangle =\frac{1}{2}[1+(-1)^{N}]\prod_{j=1}^{N}\cos\alpha_{j}+\cos(\sum_{j=1}^{N}\varphi_{j})\prod_{j=1}^{N}\sin\alpha_{j}.
\end{split}
\end{align}
 Obviously, only the terms $\prod_{j=1}^{N}\cos\alpha_{j}$, $\cos(\sum_{j=1}^{N}\varphi_{j})$ and $\prod_{j=1}^{N}\sin\alpha_{j}$ are contributed by
  the projections of the measurement operators in the subspace of $\{|00...0\rangle, |11...1\rangle\}$.

For brevity, we ignore the cases:
(i) ${{(\prod_{j=1}^{N-1}\cos\alpha_{j})^{2}+(\prod_{j=1}^{N-1}\sin\alpha_{j})^{2}}}=0$;
(ii) ${{(\prod_{j=1}^{N-1}\cos\alpha_{j}^{\prime})^{2}+(\prod_{j=1}^{N-1}\sin\alpha_{j}^{\prime})^{2}}}=0$;
(iii) $\sin \alpha_N=0$ when $N$ is odd;
(iv) $\sin \alpha_N^{\prime}=0$ when $N$ is odd.
These cases compose a zero measure set, and do not violate the generalized CHSH inequality (\ref{CHSHN}).
To  connect  the expected value to the two-qubit system in the state $|\psi_+\rangle$, we define  the following two mappings.
The first one uniquely leads to a single-qubit observable, for a given $(N-1)$-qubit operator in (\ref{A0}), as
\begin{equation}\label{gamma}
\Gamma[\mathbb{A}_{a}]:=\vec{\mathrm{n}}_{a}\cdot\vec{\sigma}.
\end{equation}
Take $\Gamma[\mathbb{A}_{0}]:=\vec{\mathrm{n}}_{0}\cdot\vec{\sigma}$ as an example.
 The Bloch vector $\vec{\mathrm{n}}_{0}=(\sin\gamma\cos\beta,\sin\gamma\sin\beta,\cos\gamma)$, with
 $\sin\gamma={\prod_{j=1}^{N-1}\sin\alpha_{j}}/{\sqrt{(\prod_{j=1}^{N-1}\cos\alpha_{j})^{2}+(\prod_{j=1}^{N-1}\sin\alpha_{j})^{2}}}$
 and $\beta=\sum_{j=1}^{N-1}\varphi_{j}$.
The other vector $\vec{\mathrm{n}}_{1}$  is in the similar form.
The second mapping projects the single-qubit  observable in (\ref{A0}) onto the equator of Bloch sphere; i.e.,
\begin{equation}\label{theta}
\Theta[\mathbb{B}_{b}]:=\vec{r}_{b}\cdot\vec{\sigma},
\end{equation}
with
$\vec{r}_{0}=(\cos\varphi_{N},\sin\varphi_{N},0)$ and $\vec{r}_{1}=(\cos\varphi_{N}^{\prime},\sin\varphi_{N}^{\prime},0)$.
When $N$ is even,
one can choose
\begin{equation}
A_{0}=\Gamma\left[\mathbb{A}_{0}\right],A_{1}=\Gamma\left[\mathbb{A}_{1}\right],B_{0}=\mathbb{B}_{0},B_{1}=\mathbb{B}_{1},
\end{equation}
while
\begin{equation}\label{B0}
A_{0}=\Gamma\left[\mathbb{A}_{0}\right],A_{1}=\Gamma\left[\mathbb{A}_{1}\right],B_{0}=\Theta[\mathbb{B}_{0}],{B}_{1}=\Theta[\mathbb{B}_{1}]
\end{equation}
when $N$ is odd,
and obtain the two-qubit Bell function
\begin{equation}\label{eq..18}
\langle I_{CHSH}^{2}\rangle=\langle A_{0}B_{0}\rangle+\langle A_{0}B_{1}\rangle+\langle A_{1}B_{0}\rangle-\langle A_{1}B_{1}\rangle,
\end{equation}
corresponding to $\langle I_{CHSH}^{N}\rangle$ in (\ref{CHSHN}).

The above measurement operators $A_{a}$ and $B_{b}$  can always be expressed as ${A}_{a}=\mathcal{A}_{a}^{\ast}\sigma_{x}\mathcal{A}_{a}^{T}$
and ${B}_{b}= \mathcal{B}_{b}^{\dagger}\sigma_{x}\mathcal{B}_{b}$,
with  $\mathcal{A}_{a}^{T}$ and $\mathcal{B}_{b}$  being  two local unitary transformations.
That is, any measurement of the Bell function $\langle I_{CHSH}^{N}\rangle$ in (\ref{CHSHN}) can be mapped to a strategy in the CHSH game,
and consequently to the CHSH* game according to the relation $C_{ab}=\mathcal{A}_{a}^{\dagger}\mathcal{B}_{b}^{\dagger}\sigma_{x}\mathcal{B}_{b}\mathcal{A}_{a}$ given by Henaut \textit{et al.} \cite{PRA.98.060302(R)}.
The two maps are constructed based on the fact that, the expected value of $\mathbb{A}_a\mathbb{B}_b$ on the state $|G\rangle$ is equivalent to the one of $\bar{A}_a\bar{B}_b$ on the state $|\psi_{+}\rangle$. The two single-qubit operators, $\bar{A}_a$ and $\bar{B}_b$, are in the form of (\ref{eq.2}), but the Bloch vector of $\bar{A}_a$ is in the unit ball in general. Therfore, we introduce a normalization coeffieient in (\ref{gamma}).
Under these mappings, we have the two following theorems.

\newtheorem{thm}{\bf Theorem}
\begin{thm}\label{thm1}
$\forall$ $N\geq 3$,
under the mappings in (\ref{gamma}-\ref{eq..18}),  the violation of the CHSH inequality by the Bell state $|\psi_+\rangle$  is  a necessary condition
for  the violation  of the general CHSH inequality by the GHZ state $|G\rangle$.
More particularly,  when $\langle I^{N}_{CHSH}\rangle > 2$,  $\langle I^{2}_{CHSH}\rangle\geq\langle I^{N}_{CHSH}\rangle$;
and when $\langle I^{N}_{CHSH}\rangle < -2$,  $\langle I^{2}_{CHSH}\rangle\leq\langle I^{N}_{CHSH}\rangle$.
\end{thm}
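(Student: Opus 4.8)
The plan is to compress the geometry of all the measurement directions into a single scalar ``damping'' identity between $\langle I^{N}_{CHSH}\rangle$ and $\langle I^{2}_{CHSH}\rangle$, and then read off the inequality: this is immediate for even $N$, and for odd $N$ it comes down to one genuine extra estimate.

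First I would make the correspondence quantitative. Using the expansion (\ref{eq.12}) of $\langle G|\bigotimes_{j}(\vec v_{j}\cdot\vec\sigma)|G\rangle$ (and its three analogues) together with the definitions (\ref{gamma})--(\ref{theta}) of $\Gamma$ and $\Theta$, a direct computation gives, for each term of (\ref{CHSHN}),
\begin{equation}
\langle\mathbb A_{a}\mathbb B_{b}\rangle=q_{a}\,t_{b}\,\langle A_{a}B_{b}\rangle,\qquad A_{a}=\Gamma[\mathbb A_{a}],
\end{equation}
with $B_{b}=\mathbb B_{b}$ (even $N$) or $B_{b}=\Theta[\mathbb B_{b}]$ (odd $N$), where $q_{a}=\sqrt{(\prod_{j=1}^{N-1}\cos\alpha_{j}^{(a)})^{2}+(\prod_{j=1}^{N-1}\sin\alpha_{j}^{(a)})^{2}}\in(0,1]$ is the normalization coefficient hidden inside $\Gamma$ ($\alpha_{j}^{(0)}=\alpha_{j}$, $\alpha_{j}^{(1)}=\alpha_{j}'$), and $t_{b}=1$ for all $b$ when $N$ is even while $t_{0}=\sin\alpha_{N}$, $t_{1}=\sin\alpha_{N}'\in(0,1]$ when $N$ is odd --- exactly the norm lost when $\Theta$ flattens $X_{N}$ onto the equator; cases (i)--(iv) are precisely what guarantees $q_{a},t_{b}>0$. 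Writing $a_{ab}:=\langle A_{a}B_{b}\rangle\in[-1,1]$ and grouping by $a$,
\begin{equation}
\langle I^{N}_{CHSH}\rangle=q_{0}u+q_{1}v,\quad u:=t_{0}a_{00}+t_{1}a_{01},\quad v:=t_{0}a_{10}-t_{1}a_{11},\quad |u|,|v|\le2,
\end{equation}
while by (\ref{eq..18}) $\langle I^{2}_{CHSH}\rangle=a_{00}+a_{01}+a_{10}-a_{11}$.

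Next comes the sign step. If $\langle I^{N}_{CHSH}\rangle>2$ then, since $q_{1}v\le|v|\le2$, necessarily $q_{0}u>0$, hence $u>0$; symmetrically $v>0$. When $N$ is even, $t_{0}=t_{1}=1$, so $u=a_{00}+a_{01}$ and $v=a_{10}-a_{11}$ are the \emph{undamped} combinations and
\begin{equation}
\langle I^{2}_{CHSH}\rangle-\langle I^{N}_{CHSH}\rangle=(1-q_{0})u+(1-q_{1})v\ \ge\ 0,
\end{equation}
since $q_{a}\le1$ and $u,v>0$. This disposes of every even $N$; the case $\langle I^{N}_{CHSH}\rangle<-2$ then follows by replacing $\mathbb A_{0},\mathbb A_{1}$ by $-\mathbb A_{0},-\mathbb A_{1}$ (flip one tensor factor of each), which negates both Bell functions and turns the hypothesis into $\langle I^{N}_{CHSH}\rangle>2$.

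The odd case is the main obstacle, since $t_{0}\neq t_{1}$ in general, so $u,v$ are no longer the plain sums and a term-by-term comparison breaks down. Here I would pass to the Bloch picture: since $\Theta[\mathbb B_{b}]$ lies on the equator, writing $\vec r_{b}$ for its (unit, equatorial) Bloch vector and $\vec w_{a}$ for the equatorial projection of the Bloch vector of $A_{a}=\Gamma[\mathbb A_{a}]$ (so $|\vec w_{a}|\le1$), one has $a_{ab}=\vec w_{a}\cdot\vec r_{b}$ (up to the harmless $y\mapsto-y$ reflection from $|\psi_{+}\rangle$), and the whole comparison takes place in a plane: $\langle I^{2}_{CHSH}\rangle-\langle I^{N}_{CHSH}\rangle$ is linear in $\vec w_{0},\vec w_{1}$ and $\{\langle I^{N}_{CHSH}\rangle\ge2\}$ is a half-space. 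It then suffices to show that the minimum of $\langle I^{2}_{CHSH}\rangle-\langle I^{N}_{CHSH}\rangle$ over $\{|\vec w_{a}|\le1,\ \langle I^{N}_{CHSH}\rangle\ge2\}$ is $\ge0$; by weak conic duality over a product of two discs this follows once one exhibits a multiplier $\kappa\ge1$ with
\begin{equation}
2(\kappa-1)\ \ge\ \bigl|\vec\Sigma^{+}-\kappa\,\vec e_{0}\bigr|+\bigl|\vec\Sigma^{-}-\kappa\,\vec e_{1}\bigr|,\qquad\vec\Sigma^{\pm}=\vec r_{0}\pm\vec r_{1},
\end{equation}
$\vec e_{0}=q_{0}(t_{0}\vec r_{0}+t_{1}\vec r_{1})$, $\vec e_{1}=q_{1}(t_{0}\vec r_{0}-t_{1}\vec r_{1})$, the reduction hinging on the identities $\vec\Sigma^{\pm}=\vec d_{b}+\vec e_{b}$, where $\vec d_{b}$ is the coefficient of $\vec w_{b}$ in $\langle I^{2}_{CHSH}\rangle-\langle I^{N}_{CHSH}\rangle$. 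Equivalently, and more concretely: $\langle I^{N}_{CHSH}\rangle>2$ together with $u\le|t_{0}\vec r_{0}+t_{1}\vec r_{1}|$, $v\le|t_{0}\vec r_{0}-t_{1}\vec r_{1}|$ and $|t_{0}\vec r_{0}+t_{1}\vec r_{1}|^{2}+|t_{0}\vec r_{0}-t_{1}\vec r_{1}|^{2}=2(t_{0}^{2}+t_{1}^{2})\le4$ forces $u,v$ near their maxima, hence pins $\vec w_{0}$ close to the direction of $t_{0}\vec r_{0}+t_{1}\vec r_{1}$ and $\vec w_{1}$ close to that of $t_{0}\vec r_{0}-t_{1}\vec r_{1}$; substituting these near-optimal alignments into $\langle I^{2}_{CHSH}\rangle$ and controlling the residual yields $\langle I^{2}_{CHSH}\rangle\ge\langle I^{N}_{CHSH}\rangle$. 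The genuinely hard part is to carry out this estimate \emph{uniformly} in $q_{a},t_{b},\vec r_{b}$ --- equivalently, to produce the certificate $\kappa$ in closed form; with it, $\langle I^{N}_{CHSH}\rangle<-2$ again follows exactly as in the even case by negating $\mathbb A_{0},\mathbb A_{1}$.
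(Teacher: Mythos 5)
Your even-$N$ argument is correct and is essentially the paper's: the scalar identity $\langle\mathbb A_a\mathbb B_b\rangle=q_a\langle A_aB_b\rangle$ with $q_a\in(0,1]$, the deduction $u,v>0$ from $\langle I^N_{CHSH}\rangle>2$ and $|u|,|v|\le2$, and the telescoping $(1-q_0)u+(1-q_1)v\ge0$. The reduction of the case $\langle I^N_{CHSH}\rangle<-2$ to the case $>2$ by negating $\mathbb A_0,\mathbb A_1$ is also fine, since flipping one tensor factor negates $\Gamma[\mathbb A_a]$ as well.

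The gap is the odd-$N$ case, which you leave as a sketch and where you yourself flag that the certificate $\kappa$ has not been produced. That is not a cosmetic omission: as written, half of the theorem is unproved. Moreover, the obstacle you identify (``$t_0\ne t_1$, so a term-by-term comparison breaks down'') is not actually there, and the heavy machinery of Bloch-disc conic duality is unnecessary. The point you are missing is that the \emph{column-rescaled} quantities $m_{ab}:=\langle\mathbb A_a\mathbb B_b\rangle/t_b=q_a\,\langle A_aB_b\rangle$ still satisfy $|m_{ab}|\le1$ (either because $q_a\le1$ and $|\langle A_aB_b\rangle|\le1$, or directly from the explicit form (\ref{eq.12}), whose odd-$N$ expectation carries an overall factor $\sin\alpha_N^{(b)}$). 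This lets you run your sign argument twice, once per index. Step one: the column sums $\sum_a\langle\mathbb A_a\mathbb B_0\rangle$ and $\sum_a(-1)^a\langle\mathbb A_a\mathbb B_1\rangle$ are both positive (each is $\ge\langle I^N_{CHSH}\rangle-2>0$ minus a quantity bounded by $2$), so dividing the first by $t_0\le1$ and the second by $t_1\le1$ only increases them, giving $\sum_{a,b}(-1)^{ab}m_{ab}\ge\langle I^N_{CHSH}\rangle>2$. Step two: since $|m_{ab}|\le1$, the row sums $m_{00}+m_{01}$ and $m_{10}-m_{11}$ are both positive, so dividing them by $q_0\le1$ and $q_1\le1$ respectively again only increases the total, yielding $\langle I^2_{CHSH}\rangle\ge\sum_{a,b}(-1)^{ab}m_{ab}\ge\langle I^N_{CHSH}\rangle$. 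This two-stage rescaling is exactly the paper's proof (its Eqs.\ (\ref{eq20})--(\ref{eq21})); with it your argument closes, and your duality detour can be discarded.
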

\begin{proof}
 The expected values $\langle I_{CHSH}^{N}\rangle$ and $\langle I_{CHSH}^{2}\rangle$ can be written as
\begin{subequations}
\begin{align}
&\langle I_{CHSH}^{N}\rangle=\sum_{a,b\in\mathbb{Z}_{2}}(-1)^{ab}\langle \mathbb{A}_{a}\otimes \mathbb{B}_{b}\rangle \label{eq..23}\\
&\langle I_{CHSH}^{2}\rangle=\sum_{a,b\in\mathbb{Z}_{2}}(-1)^{ab}\langle  A_{a}\otimes B_{b}\rangle.\label{CHSH2}
\end{align}
\end{subequations}
The  range  of $\langle \mathbb{A}_{a}\otimes \mathbb{B}_{b}\rangle$ is $[-1,1]$.
When $\langle I^{N}_{CHSH}\rangle > 2$, one has
\begin{subequations}\label{eq.13}
\begin{align}
&\sum_{b\in \mathbb{Z}_{2}}\langle \mathbb{A}_{0}\otimes \mathbb{B}_{b}\rangle > 0, \ \ \ \ \ \ \sum_{b\in\mathbb{Z}_{2}}(-1)^{b}\langle \mathbb{A}_{1}\otimes \mathbb{B}_{b}\rangle > 0\\
&\sum_{a\in \mathbb{Z}_{2}}\langle \mathbb{A}_{a}\otimes \mathbb{B}_{0}\rangle > 0, \ \ \ \ \ \ \sum_{a\in\mathbb{Z}_{2}}(-1)^{a}\langle \mathbb{A}_{a}\otimes \mathbb{B}_{1}\rangle > 0.
\end{align}
\end{subequations}
Similarly,  when  $\langle I^{N}_{CHSH}\rangle < -2$,  the greater-than signs in the four inequalities (\ref{eq.13}) become the less-than signs.
Let us denote the normalization constants of $\Gamma[\mathbb{A}_{0}]$ and $\Gamma[\mathbb{A}_{1}]$ as $\varepsilon=\frac{1}{\sqrt{(\prod_{j=1}^{N-1}\cos\alpha_{j})^{2}+(\prod_{j=1}^{N-1}\sin\alpha_{j})^{2}}}$ and $\varepsilon^{\prime}=\frac{1}{\sqrt{(\prod_{j=1}^{N-1}\cos\alpha_{j}^{\prime})^{2}+(\prod_{j=1}^{N-1}\sin\alpha_{j}^{\prime})^{2}}}$.
They satisfy $\varepsilon\geq1$ and $\varepsilon^{\prime}\geq1$.

When $N$ is even,
\begin{align}\label{eq..25}
\begin{split}
&\langle  A_{0}\otimes B_{0}\rangle =\varepsilon\langle \mathbb{A}_{0}\otimes \mathbb{B}_{0}\rangle\\
&\langle  A_{0}\otimes B_{1}\rangle =\varepsilon\langle \mathbb{A}_{0}\otimes \mathbb{B}_{1}\rangle\\
&\langle  A_{1}\otimes B_{0}\rangle =\varepsilon^{\prime}\langle \mathbb{A}_{1}\otimes \mathbb{B}_{0}\rangle\\
&\langle  A_{1}\otimes B_{1}\rangle =\varepsilon^{\prime}\langle \mathbb{A}_{1}\otimes \mathbb{B}_{1}\rangle.
\end{split}
\end{align}
Multiplying by weighting coefficients $(-1)^{ab}$ and summing up them,
according to the  inequalities (\ref{eq.13}a)
one can find
$
\langle I^{2}_{CHSH}\rangle \geq \langle I_{CHSH}^{N}\rangle,
$
when $\langle I^{N}_{CHSH}\rangle > 2$.
Evidenced by the same token,  when $\langle I^{N}_{CHSH}\rangle <-2$, $\langle I^{2}_{CHSH}\rangle \leq \langle I_{CHSH}^{N}\rangle.$

When $N$ is odd, one has
\begin{align}\label{ABOdd}
\begin{split}
&\langle  A_{0}\otimes {B}_{0}\rangle = \frac{\varepsilon}{\sin\alpha_{N}}\langle \mathbb{A}_{0}\otimes \mathbb{B}_{0}\rangle\\
&\langle  A_{0}\otimes {B}_{1}\rangle = \frac{\varepsilon}{\sin\alpha_{N}^{\prime}}\langle \mathbb{A}_{0}\otimes \mathbb{B}_{1}\rangle\\
&\langle  A_{1}\otimes {B}_{0}\rangle = \frac{\varepsilon^{\prime}}{\sin\alpha_{N}}\langle \mathbb{A}_{1}\otimes \mathbb{B}_{0}\rangle\\
&\langle  A_{1}\otimes {B}_{1}\rangle = \frac{\varepsilon^{\prime}}{\sin\alpha_{N}^{\prime}}\langle \mathbb{A}_{1}\otimes \mathbb{B}_{1}\rangle.
\end{split}
\end{align}
When $\langle I^{N}_{CHSH}\rangle > 2$,  according to the  inequalities (\ref{eq.13}b), it is direct to obtain
\begin{equation}\label{eq20}
\frac{1}{\sin\alpha_{N} }  \sum_{a\in \mathbb{Z}_{2}}\langle \mathbb{A}_{a}\otimes \mathbb{B}_{0}\rangle  +   \frac{1}{\sin\alpha_{N}^{\prime} } \sum_{a\in\mathbb{Z}_{2}}(-1)^{a}\langle \mathbb{A}_{a}\otimes \mathbb{B}_{1}\rangle   >\langle I^{N}_{CHSH}\rangle .
\end{equation}
The form of (\ref{eq.12}) leads to  $\frac{\langle \mathbb{A}_{a}\otimes \mathbb{B}_{0}\rangle}{\sin\alpha_{N}} $ and $\frac{\langle\mathbb{A}_{a}\otimes \mathbb{B}_{1}\rangle}{\sin\alpha_{N}^{\prime}}\in[-1,1]$.
Consequently,
\begin{align}\label{eq21}
\begin{split}
&\frac{1}{\sin\alpha_{N} }   \langle \mathbb{A}_{0}\otimes \mathbb{B}_{0}\rangle  +   \frac{1}{\sin\alpha_{N}^{\prime} } \langle \mathbb{A}_{0}\otimes \mathbb{B}_{1}\rangle >0, \\
&\frac{1}{\sin\alpha_{N} }   \langle \mathbb{A}_{1}\otimes \mathbb{B}_{0}\rangle  -   \frac{1}{\sin\alpha_{N}^{\prime} } \langle \mathbb{A}_{1}\otimes \mathbb{B}_{1}\rangle   >0 .
\end{split}
\end{align}
Multiplying by weighting coefficients $(-1)^{ab}$ and summing up the terms in (\ref{ABOdd}), according to the  relations (\ref{eq20}) and (\ref{eq21})
one can find
$
\langle I^{2}_{CHSH}\rangle \geq \langle I_{CHSH}^{N}\rangle,
$
when $\langle I^{N}_{CHSH}\rangle > 2$.
Similarly,  when $\langle I^{N}_{CHSH}\rangle <-2$, $\langle I^{2}_{CHSH}\rangle \leq \langle I_{CHSH}^{N}\rangle.$
This ends the proof.

\end{proof}

There are two corollaries of Theorem \ref{thm1} as follows.
(i) The maximal violations of the GHZ state
cannot exceed the Tsirelson's bound $\pm2\sqrt{2}$, which has been found in Ref. \cite{2021Greenberger}.
(ii) When the expected value for the GHZ state $\langle I^{N}_{CHSH}\rangle= \pm 2\sqrt{2}$ , the corresponding $\langle I^{2}_{CHSH}\rangle= \langle I^{N}_{CHSH}\rangle$.

\begin{thm}\label{thm2}
When the expected value for the GHZ state saturates the Tsirelson's bound, $\langle I^{N}_{CHSH}\rangle= \pm 2\sqrt{2}$,
the Bloch vectors of the operators in $\mathbb{A}_{0}$ and $\mathbb{A}_{1}$  in (\ref{A0})
are restricted as follows three cases
\begin{subequations}
\begin{align}
&\mathrm{(i)}&\   &\vec{n}_{j}=(0,0,\pm1) ,&\vec{n}_{j}^{\prime}&=(\cos\varphi_{j}^{\prime},\sin\varphi_{j}^{\prime},0)\label{eq....2.1}\\
&\mathrm{(ii)}&\ &\vec{n}_{j}=(\cos\varphi_{j},\sin\varphi_{j},0),&\vec{n}_{j}^{\prime}&=(0,0,\pm1)\label{eq....2.2}\\
&\mathrm{(iii)}&\      &\vec{n}_{j}=(\cos\varphi_{j},\sin\varphi_{j},0),&\vec{n}_{j}^{\prime}&=(\cos\varphi_{j}^{\prime},\sin\varphi_{j}^{\prime},0)\label{eq....2.3}.
\end{align}
\end{subequations}
Only the case (iii) exists in the system with an odd $N$.
\end{thm}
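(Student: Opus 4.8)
The plan is to run the saturation analysis of Theorem~\ref{thm1} in reverse. By corollary~(ii) of Theorem~\ref{thm1}, if $\langle I^{N}_{CHSH}\rangle=\pm2\sqrt{2}$ then $\langle I^{2}_{CHSH}\rangle=\langle I^{N}_{CHSH}\rangle$, so every estimate used in that proof is in fact an equality, and I would read off what that forces. Take $\langle I^{N}_{CHSH}\rangle=+2\sqrt{2}$ (the minus case is symmetric), and assume $N\ge3$ as in Theorem~\ref{thm1}. For even $N$, (\ref{eq..25}) together with the strict positivity in (\ref{eq.13}) gives
\begin{equation}
\langle I^{2}_{CHSH}\rangle-\langle I^{N}_{CHSH}\rangle=(\varepsilon-1)\!\!\sum_{b\in\mathbb{Z}_{2}}\!\langle\mathbb{A}_{0}\otimes\mathbb{B}_{b}\rangle+(\varepsilon'-1)\!\!\sum_{b\in\mathbb{Z}_{2}}\!(-1)^{b}\langle\mathbb{A}_{1}\otimes\mathbb{B}_{b}\rangle ,
\end{equation}
whose two summands are strictly positive while $\varepsilon,\varepsilon'\ge1$; hence the difference vanishes only if $\varepsilon=\varepsilon'=1$. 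For odd $N$ I would apply the same bookkeeping to the chain (\ref{ABOdd})--(\ref{eq21}): the step producing (\ref{eq20}) multiplies the positive quantities of (\ref{eq.13}) by $1/\sin\alpha_{N}^{(\prime)}\ge1$, so equality there forces $\sin\alpha_{N}=\sin\alpha_{N}'=1$, after which the argument collapses to the even-$N$ one and again yields $\varepsilon=\varepsilon'=1$.

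Next I would solve $\varepsilon=1$, i.e.\ $(\prod_{j=1}^{N-1}\cos\alpha_{j})^{2}+(\prod_{j=1}^{N-1}\sin\alpha_{j})^{2}=1$. Writing $c_{j}=\cos^{2}\alpha_{j}\in[0,1]$, this is $\prod_{j}c_{j}+\prod_{j}(1-c_{j})=1$. Since a product of numbers in $[0,1]$ is at most any single factor, $\prod_{j}c_{j}\le c_{k}$ and $\prod_{j}(1-c_{j})\le1-c_{k}$ for every $k$, so the left side is $\le1$, with equality only if $\prod_{j}c_{j}=c_{k}$ for every $k$; thus all $c_{k}$ equal a common value $c$, and $c^{N-1}=c$ forces $c\in\{0,1\}$ (here $N-1\ge2$). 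Translated back to Bloch vectors, $\varepsilon=1$ holds iff either $\vec{n}_{j}=(0,0,\pm1)$ for all $j\le N-1$ or $\vec{n}_{j}=(\cos\varphi_{j},\sin\varphi_{j},0)$ for all $j\le N-1$, and the same dichotomy for the primed directions follows from $\varepsilon'=1$.

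Combining the two dichotomies leaves four possibilities. I would discard the one in which both families are of the polar type: then $\mathbb{A}_{0}=\pm\sigma_{z}^{\otimes(N-1)}=\pm\mathbb{A}_{1}$ (equivalently $A_{0}=\pm A_{1}$), and any CHSH combination with $A_{0}=\pm A_{1}$ is bounded by $2$, contradicting the Tsirelson value. The three survivors are precisely (\ref{eq....2.1})--(\ref{eq....2.3}). Finally, for odd $N$, formula (\ref{eq.12}) and its $\mathbb{B}_{1}$ analogue show that if the $\vec{n}_{j}$ are all of the polar type then $\prod_{j=1}^{N-1}\sin\alpha_{j}=0$ and $\frac{1}{2}[1+(-1)^{N}]=0$, so $\langle\mathbb{A}_{0}\otimes\mathbb{B}_{0}\rangle=\langle\mathbb{A}_{0}\otimes\mathbb{B}_{1}\rangle=0$ and hence $\langle I^{N}_{CHSH}\rangle=\langle\mathbb{A}_{1}\otimes\mathbb{B}_{0}\rangle-\langle\mathbb{A}_{1}\otimes\mathbb{B}_{1}\rangle\le2$; similarly all-polar primed directions give $\langle I^{N}_{CHSH}\rangle\le2$. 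Hence only case (iii) can reach $\pm2\sqrt{2}$ when $N$ is odd.

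The step I expect to be the real work is the first paragraph: faithfully propagating the saturation identity $\langle I^{2}_{CHSH}\rangle=\langle I^{N}_{CHSH}\rangle$ back through the layered estimates of Theorem~\ref{thm1} in the odd-$N$ case, where (\ref{eq20})--(\ref{eq21}) chain several bounds together, and verifying that equality in the final bound really splits into $\varepsilon=\varepsilon'=1$ \emph{and} $\sin\alpha_{N}=\sin\alpha_{N}'=1$ rather than some weaker coupled condition. The elementary lemma on $\prod_{j}c_{j}+\prod_{j}(1-c_{j})=1$ and the collinearity exclusion are then routine.
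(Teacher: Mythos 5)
Your proof is correct and follows essentially the same route as the paper: saturation forces $\varepsilon=\varepsilon'=1$ (plus $\sin\alpha_N=\sin\alpha_N'=1$ for odd $N$), which yields the all-polar/all-equatorial dichotomy separately for $\{\vec{n}_{j}\}$ and $\{\vec{n}_{j}^{\prime}\}$, after which the forbidden combinations are excluded by the ordinary CHSH bound. Your explicit lemma on $\prod_{j}c_{j}+\prod_{j}(1-c_{j})=1$ and your use of (\ref{eq.12}) to rule out cases (i)--(ii) for odd $N$ merely make rigorous steps the paper asserts without detail; the substance of the argument is the same.
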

\begin{proof}
According to Theorem \ref{thm1},
$\langle I^{N}_{CHSH}\rangle=\langle I^{2}_{CHSH}\rangle$ requires $\varepsilon=\varepsilon^{\prime}=1$.
That is,
${{(\prod_{j=1}^{N-1}\cos\alpha_{j})^{2}+(\prod_{j=1}^{N-1}\sin\alpha_{j})^{2}}}
={{(\prod_{j=1}^{N-1}\cos\alpha_{j}^{\prime})^{2}+(\prod_{j=1}^{N-1}\sin\alpha_{j}^{\prime})^{2}}}
=1$.
Either all of the Bloch vectors $\vec{n}_j$, with $j=1...N-1$, are parallel to z axis, or  perpendicular to z axis.
This holds true for $\vec{n}_j^{\prime}$ with $j=1...N-1$.

When $N$ is odd, $\langle I^{N}_{CHSH}\rangle=\langle I^{2}_{CHSH}\rangle$ also requires $\sin \alpha_N=\sin \alpha_N^{\prime}=1$.
Hence, the Bloch vectors of $B_0$ and $B_1$ in (\ref{B0})  are perpendicular to z axis.
The ones of $A_0$ and $A_1$ should also be perpendicular to z axis, to enable $\langle I^{2}_{CHSH}\rangle$ to achieve $\pm2\sqrt{2}$.
The correspondence  in (\ref{B0})  leads to that only the case (iii)  is allowed.

When $N$ is even, $\vec{n}_j$ and $\vec{n}_j^{\prime}$ cannot be simultaneously parallel to z axis.
This is naturally derived from the fact that the measurement directions of $A_0$ and $A_1$ in (\ref{B0}) are perpendicular when  $\langle I^{2}_{CHSH}\rangle=\pm 2\sqrt{2}$.
In brief, the three cases of the Bloch vectors, (i), (ii) and (iii), are possible when the GHZ state  achieves the maximal violations of the generalized CHSH inequality, with $N$ being even.
This ends the proof.

\end{proof}

\section{Robust Violations of  the generalized CHSH inequality}

\subsection{Framework}

In this section, we show that
the above mappings provide an explanation for the quantum phenomenon of robust violations of the generalized CHSH inequality \cite{2021Greenberger}.
Based on the explanation, one can exactly demonstrate the degeneracy of the Bell function $I_{CHSH}^{N}$,
which corresponds to the dimension of noises for  robust violations.


According to the results in section \ref{map} and Ref. \cite{2021Greenberger},
when the $N$-qubit Bell function $  \langle I_{CHSH}^{N}\rangle = \pm 2\sqrt{2}$,
the corresponding $ \langle I_{CHSH}^{2}\rangle$, and  consequently  $ \langle I_{CHSH}^{1}\rangle$,  reach $\pm 2\sqrt{2}$.
In addition,  the corresponding terms in the three Bell functions are equal; i.e.,
\begin{equation}\label{threeeq}
\langle C_{ab}\rangle =\langle A_{a} B_{b}\rangle= \langle \mathbb{A}_{a} \mathbb{B}_{b}\rangle,
\end{equation}
 with
 \begin{subequations}
\begin{align}
& \langle C_{ab}\rangle =\langle +| \mathcal{A}_{a}^{\dagger}\mathcal{B}_{b}^{\dagger}\sigma_{x}\mathcal{B}_{b}\mathcal{A}_{a} |+\rangle,\\
& \langle A_{a} B_{b}\rangle       =  \langle \psi_+|\mathcal{A}_{a}^{\ast}\sigma_{x}\mathcal{A}_{a}^{T}\otimes \mathcal{B}_{b}^{\dagger}\sigma_{x}\mathcal{B}_{b} |\psi_+ \rangle.
\end{align}
\end{subequations}
One always can inset the $2\times2$ unit operator, $\openone= u^{*} u^{T}$ with  $u$ being unitary, between the two unitary operators in $\langle C_{ab}\rangle $,
as $\langle +| \mathcal{A}_{a}^{\dagger}\mathcal{B}_{b}^{\dagger}\sigma_{x}\mathcal{B}_{b}\mathcal{A}_{a} |+\rangle=
\langle +| \mathcal{A}_{a}^{\dagger}u^{*} u^{T}\mathcal{B}_{b}^{\dagger}\sigma_{x}\mathcal{B}_{b}u^{*} u^{T}\mathcal{A}_{a} |+\rangle$.
It is equivalent to a local unitary transformation on the two qubit system as
\begin{align}\label{u}
\langle \psi_+|\mathcal{A}_{a}^{\ast}\sigma_{x}\mathcal{A}_{a}^{T}\otimes \mathcal{B}_{b}^{\dagger}\sigma_{x}\mathcal{B}_{b} |\psi_+ \rangle
&=\langle \psi_+| (u^{\dag}\otimes u^{T}) (\mathcal{A}_{a}^{\ast}\sigma_{x}\mathcal{A}_{a}^{T}\otimes \mathcal{B}_{b}^{\dagger}\sigma_{x}\mathcal{B}_{b})(u\otimes u^{*}) |\psi_+ \rangle \\
&= \langle \psi_+| (u^{\dag}{A}_{a} u) \otimes  (u^{T} {B}_{b} u^{*}) |\psi_+ \rangle. \nonumber
\end{align}
This actually gives  the symmetry operations of the  Bell state $|\psi_+\rangle$, that
\begin{equation}\label{symm2}
(u\otimes  u^{*})|\psi_{+}\rangle=|\psi_{+}\rangle,
\end{equation}
and the relation between different choices of observers achieving the maximal violation.

To preserve the equations (\ref{threeeq}) and the value of $\langle I_{CHSH}^{N}\rangle$, the local unitary operator $u$ can only have some special forms, which we will list in  the following part.
For a given $u$, the corresponding transformation of the $N$-qubit system can be written as
\begin{equation}\label{uG}
\langle G|   {\mathbb{A}}_{a}  \otimes   {\mathbb{B}}_{b} |G \rangle=\langle G| (\bigotimes_{j=1}^{N}u_{j}^{\dagger}) ( {\mathbb{A}}_{a}  \otimes   {\mathbb{B}}_{b})  (\bigotimes_{j=1}^{N}u_{j})|G \rangle,
\end{equation}
where $u_{N}= u^{*}$.
At this point, note that, ${\mathbb{B}}_{b} =  {B}_{b}$ even if $N$ is odd.
In addition, the set of $u_j$ $(j=1...N-1)$ is not unique.
This is because the mapping from  $\langle I_{CHSH}^{N}\rangle$ to  $\langle I_{CHSH}^{2}\rangle$ is  many-to-one.
Similarly,  the degree of freedom of $u$ in (\ref{u})  also comes from the  many-to-one relationship between $A_a B_b$ and $C_{ab}$.
Generally, $ (\bigotimes_{j=1}^{N}u_{j})|G \rangle$ is different with $|G \rangle$,
 which indicates the largest eigenvalue of $I_{CHSH}^{N}$ is degenerate.
Mixing or superposing $|G\rangle$ with the states in the subspace does not affect $\langle I_{CHSH}^{N}\rangle$.
This is the quantum phenomenon of robust violations of Bell's inequality for the GHZ state presented by Fan  \textit{et al.} \cite{2021Greenberger}.

\begin{figure}[htp]
 \centering
 \includegraphics[width=8cm,height=5.5cm]{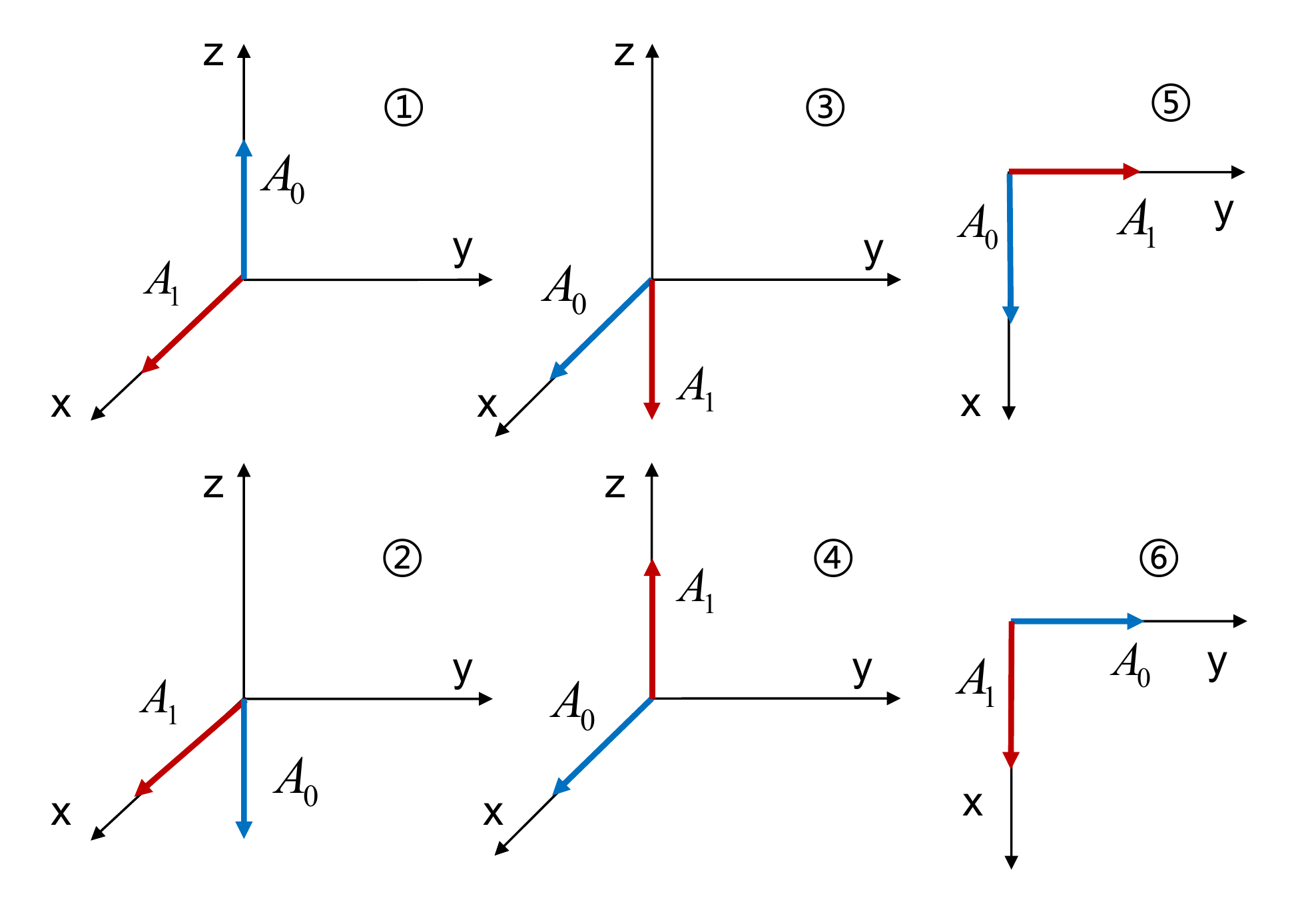}
 \caption{
Relationships between  z axis and the measurement directions of $A_{0}$ and $A_{1}$ in $I_{CHSH}^{2}$, when the corresponding  $\langle I_{CHSH}^{N}\rangle$  reaches the maximal quantum violation.
}\label{fig1}
\end{figure}

\subsection{Degeneracy}

Then, we give the details of the  local unitary transformations and degeneracy.
According to their relationship between the Bloch vectors of and the z axis,
 there are six cases of the observers in $I_{CHSH}^{2}$, as shown in Fig. \ref{fig1}.
Only the measurement directions (up to  rotations about the z axis) of $A_{a}$ are plotted,
 since $B_{a}$ can be uniquely determined by  $A_{a}$ when $\langle I_{CHSH}^{2}\rangle$  reaches the maximal  violation.
These six cases have a  two-to-one correspondences with the three possible choices of the $N$-qubit operators in Theorem \ref{thm2}
which are (i): \textcircled{1}, \textcircled{2}; (ii): \textcircled{3}, \textcircled{4}; and (iii): \textcircled{5}, \textcircled{6}.

According to the equivalence relations under the local unitary transformations on the $N$-qubit system and the exchange between $\mathbb{A}_0$ and $\mathbb{A}_1$,
it is sufficient to consider only the degeneracy of  $ I_{CHSH}^{N}$ corresponding to cases \textcircled{1} and \textcircled{5}.
For the case  \textcircled{1},
 there are six  types of the unitary operator $u$ to consider,  corresponding to the six cases in Fig. \ref{fig1} as the final states  of  $A_{a}$.
However, only  two types of $u$  for the case \textcircled{5}
 need to be considered, corresponding to the final states in \textcircled{5}  and \textcircled{6}.
This is because, it is equivalent to the one in case  \textcircled{1},
 if the $N$-qubit operators $ I_{CHSH}^{N}$ can be transformed by $\bigotimes_{j=1}^{N}u_{j}$ into the cases (i) or (ii)  in Theorem \ref{thm2}.
As show in Fig. \ref{fig2}, for  fixed $\mathbb{A}_a$ and $u$, one can derive the unitary operators $u_{j}$ by requiring $\Gamma(\bigotimes_{j=1}^{N-1}u_{j}^{\dag}  \mathbb{A}_a  \bigotimes_{j=1}^{N-1}u_{j}) = u^{\dag} \Gamma(\mathbb{A}_a) u$.
We remark that,  the initial and final directions of two Bloch vectors can uniquely determine a $2\times 2 $ unitary operator,
up to a phase factor which does not affect  value of $\langle I_{CHSH}^{N}\rangle$ in (\ref{uG}).

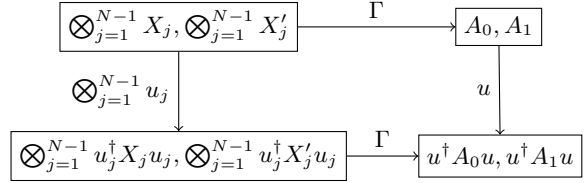
\begin{figure}[htp]
\begin{center}
  \begin{tikzpicture}[node distance=30pt]
  \node[draw, ]                        (start)   {$\bigotimes_{j=1}^{N-1} X_{j},\bigotimes_{j=1}^{N-1} X_{j}^{\prime}$};
  \node[draw, below=of start]                         (step 1) {$\bigotimes_{j=1}^{N-1}u_{j}^{\dagger}X_{j}u_{j},\bigotimes_{j=1}^{N-1}u_{j}^{\dagger}X_{j}^{\prime}u_{j}$};
  \node[draw, right=27pt of step 1]                   (step 2)  {$u^{\dagger}A_{0}u,u^{\dagger}A_{1}u$};
  \node[draw, above=of step 2,right=59.5pt of start]         (choice)  {$A_{0},A_{1}$};

  \draw[->] (start)  -- node[left]  {$\bigotimes_{j=1}^{N-1}u_{j}$}(step 1);
  \draw[->] (step 1) -- node[above] {$\Gamma$}(step 2);
  \draw[->] (start) -- node[above]  {$\Gamma$}(choice);
  \draw[->] (choice) -- node[left]  {$u$}(step 2);

\end{tikzpicture}
\end{center}
\caption{Procedure to  determine the  requirements for $u_j$, $X_j$ and $X_j^{\prime}$.
}\label{fig2}
\end{figure}

\textbf{Case \textcircled{1}.--}
The case  \textcircled{1}  exists only in the system with an even $N$.
The angles $\varphi_j^{\prime}$ in $\mathbb{A}_1$ can always be adjusted to zero by local rotations about the z axis,
which transforms $A_1$ into $\sigma_x$ simultaneously.
In addition, the single-qubit operators in $\mathbb{A}_0$, $X_j=\pm \sigma_z$ have even minus signs.
These minus signs can be removed by qubit flips without affecting the corresponding ${A}_0$.
Therefore, one can choose the initial observables as
\begin{equation}
X_1=...=X_{N-1}= \sigma_z,  \ \ \ X_1^{\prime}=...=X_{N-1}^{\prime}= \sigma_x; \ \ \  {A}_0= \sigma_z, \ \ \   {A}_1= \sigma_x.
\end{equation}

To construct the local unitary transformations $u$ and $u_j$,
we introduce three sets of unitary operators as
\begin{subequations}
\begin{align}
&u^{(1)}=\openone, & &u^{(2)}=\sigma_x,& &u^{(3)}=\exp(i \frac{\sigma_y}{2}\frac{\pi}{2}),& &u^{(4)}=\sigma_z u^{(3)},& &u^{(5)}=\exp(-i \frac{\sigma_x}{2}\frac{\pi}{2}),& &
\!\!\!\!\!\!\!\!\!
u^{(6)}=\sigma_x u^{(5)}; \\
&v^{(1)}=u^{(1)}, & &v^{(2)}=u^{(2)}, & &v^{(3)}=u^{(3)}, & &v^{(4)}=u^{(4)}, &
&v^{(5,6)}=
\begin{cases}
u^{(5,6)}& \text{$N/2$ is even},\\
u^{(6,5)}& \text{$N/2$ is odd};
\end{cases} \!\!\!\!\!\!\!\!\!
 \\
&w^{(1)}=\sigma_x v^{(1)}, & &w^{(2)}=\sigma_x v^{(2)}, & &w^{(3)}=\sigma_z v^{(3)}, & &w^{(4)}=\sigma_z v^{(4)}, & &w^{(5)}=\sigma_x v^{(5)}, & &
\!\!\!\!\!\!\!\!\!
w^{(6)}=\sigma_x v^{(6)}.
\end{align}
\end{subequations}
For an arbitrary superscript  $\nu=1,...,6$, one can check the $N$-qubit GHZ state has a symmetry as
\begin{equation}\label{symmN}
{v^{(\nu)} }^{\otimes N-1} {u^{(\nu)}}^* |G\rangle  = \exp(i\phi)|G\rangle,
\end{equation}
where ${v^{(\nu)} }^{\otimes N-1} $ denotes the direct product of $N-1$ $v^{(\nu)}$ on the first $N-1$ qubits and $\phi\in[0,2\pi]$ is a phase factor.
This can be regarded as an extension of the symmetry of the Bell state in (\ref{symm2}).

The operator $u$, transforming the initial $A_a$ into the case \textcircled{$\nu$},
can be universally written as
\begin{equation}\label{uform}
u= u^{(\nu)} \exp(i\frac{\sigma_z}{2} \delta)
\end{equation}
with $\delta\in[0,2\pi]$.
According to the correspondences between the six cases and  the three possible choices  in Theorem \ref{thm2},
there are two alternative forms of $u_j$ as
\begin{equation}\label{uj}
u_j= v^{(\nu)} \exp(i\frac{\sigma_z}{2} \delta_j) \ \ \ \text{or} \ \ \   w^{(\nu)} \exp(i\frac{\sigma_z}{2} \delta_j),
\end{equation}
with $j=1,...,N-1$ and $\delta_j \in[0,2\pi]$.
Applying them onto $A_a$ and $\mathbb{A}_a$ and requiring $\Gamma(\bigotimes_{j=1}^{N-1}u_{j}^{\dag}  \mathbb{A}_a  \bigotimes_{j=1}^{N-1}u_{j}) = u^{\dag} \Gamma(\mathbb{A}_a) u$,
one can easily obtain the two conditions on $u_{j}$,
as
\begin{equation}\label{delta}
\sum_{j=1}^{N-1} \delta_j =  \delta  \mod 2 \pi,
\end{equation}
 and the number of $ w^{(\nu)}$ in $ \bigotimes_{j=1}^{N-1}u_{j}$ being even.

Then, $\bigotimes_{j=1}^{N}u_{j} |G\rangle$ are eigenstates of the inital $I^{N}_{CHSH}$, with the same eigenvalue as $ |G\rangle $.
By utilizing the forms of $u_{j}$ in (\ref{uj}) and $u_{N}=u^*$ in (\ref{uform}), one can derive these states in three steps :
(1) rotations about the z axis with $\exp(i\frac{\sigma_z}{2} \delta_j)$ and $\exp(-i\frac{\sigma_z}{2} \delta)$;
(2) $v^{(\nu)}$, including the ones in $w^{(\nu)}$,  and $ {u^{(\nu)}}^*$;
(3) the even number of  $\sigma_x$ or $\sigma_z$ factored out from $w^{(\nu)}$.
The state $ |G\rangle $ is invariant under the operations in  the first two steps, because of the condition (\ref{delta}) and  the symmetry (\ref{symmN}).
Consequently, $\bigotimes_{j=1}^{N}u_{j} |G\rangle $ are equivalent to the results of $ |G\rangle $ multiplied by even number of $\sigma_x$ or $\sigma_z$.
Since $ |G\rangle $ is invariant under even number of $\sigma_z$,
 the degenerate states are given by qubit flips in pairs (i.e., application of even number of $\sigma_x$) on $ |G\rangle $.
The degeneracy can be directly derived as
\begin{equation}
   C_{N-1}^{0}+C_{N-1}^{2}+...+C_{N-1}^{N-2} =  2^{N-2}.
\end{equation}

\textbf{Case \textcircled{5}.--}
One can always adjust the angles $\varphi_j$ in $\mathbb{A}_0$  to zero by using local rotations about the z axis,
which transform $A_0$ into $\sigma_x$ and $A_1$ into $\sigma_y$  simultaneously.
Then, the initial observables  can be choose as
\begin{equation}\label{xj5}
X_j= \sigma_x,  \ \ \ X_j^{\prime}= \cos \varphi_j^{\prime} \sigma_x + \sin \varphi_j^{\prime} \sigma_y; \ \ \  {A}_0= \sigma_x, \ \ \   {A}_1= \sigma_y,
\end{equation}
with $j=1,...,N-1$ and $\sum_{j=1}^{N-1} \varphi_j^{\prime} = \pi/2 \mod 2\pi$.

In order to express in a similar way as the case  \textcircled{1}, we define
\begin{subequations}
\begin{align}
&u^{(5)}=\openone, & &v^{(5)}=u^{(5)},& &w^{(5)}= \sigma_x u^{(5)}; \\
&u^{(6)}=\sigma_x, & &v^{(6)}=u^{(6)},& &w^{(6)}= \sigma_x u^{(6)}.
\end{align}
\end{subequations}
Similarly, the $N$-qubit GHZ state has a symmetry as
\begin{equation}\label{symmN5}
{v^{(\mu)} }^{\otimes N-1} {u^{(\mu)}}^* |G\rangle  =\exp(i\theta) |G\rangle,
\end{equation}
with $\mu=5,6$ and $\theta\in [0,2\pi]$ being a phase factor.
The operators transforming the initial $A_a$  and $\mathbb{A}_a$ into the case \textcircled{$\mu$}, can be written as
\begin{eqnarray}\label{uform5}
&&u= u^{(\mu)} \exp(i\frac{\sigma_z}{2} \delta),\\
&&u_j= v^{(\mu)} \exp(i\frac{\sigma_z}{2} \delta_j) \ \ \ \text{or} \ \ \   w^{(\mu)} \exp(i\frac{\sigma_z}{2} \delta_j),
\end{eqnarray}
with $\delta, \delta_j \in[0,2\pi]$ and $j=1,...,N-1$.

We define the sets $J=\{1,2,3,4\cdots,N-1\}$, $K$ and $L$, with $K\subseteq J$ and $L$ being its complementary set.
The elements of $K$ are the subscripts of $u_j$ with $w^{(\mu)}$, and ones of  $L$ are for $v^{(\mu)}$.
Applying the operators (\ref{uform5}) onto $A_a$ and $\mathbb{A}_a$ and requiring $\Gamma(\bigotimes_{j=1}^{N-1}u_{j}^{\dag}  \mathbb{A}_a  \bigotimes_{j=1}^{N-1}u_{j}) = u^{\dag} \Gamma(\mathbb{A}_a) u$,
one can easily obtain
\begin{equation}\label{delta5}
\sum_{j=1}^{N-1} \delta_j =  \delta  \mod 2 \pi,
\end{equation}
 and
 \begin{equation}\label{phi5}
\sum_{k\in K} \varphi_j^{\prime} =  0  \mod \pi.
\end{equation}
The latter condition is on the initial observables $X_{j}^{\prime}$, which is a difference with the case \textcircled{1}.
The states $\bigotimes_{j=1}^{N}u_{j} |G\rangle$ can be derived by following the same three steps in the case \textcircled{1},
which lead to
 \begin{equation}
\bigotimes_{j=1}^{N}u_{j} |G\rangle=\bigotimes_{k\in K} \sigma_x^{k} |G\rangle,
\end{equation}
with $\sigma_x^{k}$ being the Pauli operator $\sigma_x$ of the $k$-th qubit.

For a fixed $K$,  $\bigotimes_{k\in K} \sigma_x^{k} |G\rangle$ reaches the maximal violations, only when the initial $I^N_{CHSH}$ satisfies the condition (\ref{phi5}).
Therefore, the number of $K$, with which the condition (\ref{phi5}) is satisfied, gives  the degeneracy of the largest eigenvalue of $I^N_{CHSH}$.
Then, the maximum degeneracy in the case \textcircled{5} is $2^{N-2}$, which can be obtained based on the following facts.
The condition (\ref{phi5}) cannot  be fulfilled  simultaneously by a subset of $J$ and its  complementary set,
which sets $2^{N-2}$ as the upper limit on the degeneracy.
A simple construction to reach the upper limit is that, $\varphi_1^{\prime}=...=\varphi_{N-2}^{\prime}=0$ and $\varphi_{N-1}^{\prime}=\pi/2$.

\textbf{Example.--}
An arbitrary choice of the operators $\mathbb{A}_a$ and $\mathbb{B}_a$ reaching the maximal violation of  $|G\rangle$
can always be transformed into the above two cases by local unitary operations.
The degenerate subspace can also be derived by the same local unitary operations on the above results.
We show these by using the example with $N=4$ provided in Ref. \cite{2021Greenberger}, which belongs to the case \textcircled{5}.

The parameters of the observables $X_{j}$ and $X_{j}^{\prime}$ are given by $\varphi_{1}=\varphi_{2}=\varphi_{4}^{\prime}=0,\varphi_{1}^{\prime}=\varphi_{2}^{\prime}=\varphi_{4}=\frac{\pi}{2}$, $\varphi_{3}=-\frac{\pi}{4}$ and $\varphi_{3}^{\prime}=\frac{\pi}{4}$.
 Then, $\langle I_{CHSH}^{N}\rangle=\langle I_{CHSH}^{2}\rangle=2\sqrt{2}$.
The operators can be adjusted into the simple form (\ref{xj5}) by using $\tau_3=\exp(i\frac{\sigma_z}{2} \frac{\pi}{4})$  and $\tau_4=\exp(-i\frac{\sigma_z}{2} \frac{\pi}{4})$ on the third and forth qubits. These lead to $\varphi_3 \rightarrow 0$,  $\varphi_4\rightarrow \pi/4$, $\varphi_3^{\prime}\rightarrow \pi/2$ and  $\varphi_4^{\prime}\rightarrow -\pi/4$.
Then, the subsets of $J$, with which the condition (\ref{phi5}) are fulfilled, are given by
\begin{equation}
K=\emptyset, \ \ \ \{1,2\},\ \ \ \{1,3\},\ \ \ \{2,3\}.
\end{equation}
Applying $\tau_3$ and $\tau_4$ onto $\bigotimes_{k\in K} \sigma_x^{k} |G\rangle$,
one obtains the four degenerate states as
\begin{align}
\begin{split}
\frac{1}{\sqrt{2}}(|0000\rangle+|1111\rangle),&  \ \ \ \ \ \  \frac{1}{\sqrt{2}}(|1100\rangle+|0011\rangle)\\
\frac{1}{\sqrt{2}}(e^{-i\frac{\pi}{4}}|1010\rangle+e^{i\frac{\pi}{4}}|0101\rangle),& \ \ \ \ \ \ \frac{1}{\sqrt{2}}(e^{-i\frac{\pi}{4}}|0110\rangle+e^{i\frac{\pi}{4}}|1001\rangle),
\end{split}
\end{align}
which are the same as the results in Ref. \cite{2021Greenberger}.

\section{summary}
In summary, we relate the two recent topics in the area of Bell-nonlocality,
which are the robust violations of Bells inequality of the GHZ states \cite{2021Greenberger} and the single-qubit quantum game \cite{PRA.98.060302(R)}.
Namely, we present the mapping from the generalized  CHSH  inequality, to distinguish the GHZ states constructed by Fan \emph{et al.} \cite{2021Greenberger} to the CHSH game, and consequently to the single-qubit  CHSH* game \cite{PRA.98.060302(R)}.
These relationships provide an explanation for the robust violations of the generalized CHSH inequality in $N$-qubit systems.
The identity transformation in the  CHSH* game,  corresponds to the symmetry of the  two-qubit Bell state,
and  further leads to the local unitary transformations generating the degenerate subspace of the $N$-qubit Bell function.
An arbitrary superposition or mixture in the subspace leads to the same expected value of the Bell function, which is the quantum phenomenon of robust violations.
Based on the explanation, we exactly prove that the maximal degeneracy is $2^{N-2}$.
It would be interesting to extend the mapping among the systems with different numbers of subsystems
to explore more topics in the area of Bell-nonlocality and entanglement, such as the identification of W states.

\begin{acknowledgments}
 This work was supported by the NSF of China (Grants No. 11675119 and No. 11575125).
\end{acknowledgments}

\bibliography{CHSHgame}
\end{document}